\newcommand*{\rom}[1]{\expandafter\@slowromancap\romannumeral #1@}
\DeclareMathOperator*{\E}{\mathbb{E}}
\DeclareMathOperator*{\argmax}{arg\,max}
\begin{document}

\title{A Note on Relaxation and Rounding in Algorithmic Mechanism Design}

\author{Salman Fadaei}

\authorrunning{Fadaei} 
%
\tocauthor{Salman Fadaei}

\institute{
\email{salman.fadaei@gmail.com}}
\maketitle

\begin{abstract}
In this note, we revisit the \emph{relaxation and rounding} technique employed several times in algorithmic mechanism design.
We try to introduce a general framework which covers the most significant algorithms in mechanism design that use the relaxation and rounding technique.
We believe that this framework is not only a generalization of the existing algorithms but also can be leveraged for further results in algorithmic mechanism design.
Before presenting the framework, we briefly define algorithmic mechanism design, its connections to game theory and computer science, and the challenges in the field.\footnote{This work was done while the author was a graduate student in the Department of Informatics, TU M\"unchen, Munich, Germany.}

\keywords{Algorithmic Mechanism Design, Relaxation, Rounding}

\end{abstract}

\section{Introduction}

A set of numbers $a_1, a_2, \ldots, a_n$ is given. 
The algorithm below finds the maximum among these numbers.

\RestyleAlgo{plain}
\SetAlgoNoEnd
\begin{algorithm} 
\SetAlgoNoLine
\NoCaptionOfAlgo



	 $\max:=a_1$\;
	
 \For{$i\leftarrow 2$ \KwTo $n$}
 {
	\uIf {$\max < a_i$} {
		$\max := a_i$\;
	}

}

\caption{}
\label{alg:max-num}
\end{algorithm}

\RestyleAlgo{boxruled}

Now, consider a setting in which these numbers belong to self-interested agents.
Self-interested agents may misreport the numbers because of their own interests.
In this situation, the algorithm above  will be useful only if we certify that the true numbers are reported by the agents.
To this end, we need to model and understand the interests of the agents.

The dominant approach to model an agent's interests or \emph{preferences} is the utility theory.
A \emph{utility function} quantifies an agent's degree of preferences across a set of alternatives or outcomes.
A utility function assigns a real number to an outcome which defines the level of happiness of an agent with the outcome. 
Working with the idea of utility functions, in lieu of preferences, is pervasive and without loss of generality according to the von Neumann--Morgenstern theorem. 
Given a set of outcomes, the preference relationship of an agent identifies an ordering over the outcomes or lotteries of outcomes.
A lottery is a probability distribution over the outcomes.
The von Neumann--Morgenstern theorem states that for any preference relationship that satisfies a set of reasonable properties, there exists a utility function that correctly represents the preference relationship \cite{shoham2008multiagent}.
The von Neumann--Morgenstern theorem thus justifies the pervasive claim that a single-dimensional function (a utility function) suffices to describe preferences over an arbitrarily complicated set of alternatives.
We benefit from this observation and work with utility functions.

Finding the maximum among a set of numbers is an algorithmic problem.
If we assume each number is the \emph{private} information of an agent, then we also want the agents to disclose their numbers truthfully to the algorithm above.
Otherwise, the algorithm will be of no use.
Hence, we face a problem of \emph{mechanism design} rather than just algorithm design.
Mechanism design or \emph{implementation theory} is sometimes called ``inverse game theory".
Thus, to better understand mechanism design, we start with some game-theoretic fundamentals.

\section{Game Theory}
Game theory is the mathematical study of interaction among independent self-interested agents or players of a game. 
Normal-form games are the most fundamental representation of strategic settings.
In normal-form games, agents' moves are simultaneous.
We briefly introduce normal-form games which will help us define useful concepts such as equilibria.

\begin{definition}[Normal-form Games]
A (finite, n-person) normal-form game is a tuple $(\mathcal{I}, A, u)$, where:
\begin{itemize}

\item $\mathcal{I}$ is a finite set of n players, indexed by $i$
\item $A = A_1 \times \ldots \times A_n$ , where $A_i$ is a finite set of actions available to player $i$. Each vector $a = (a_1, \ldots , a_n) \in A$  is called an action profile.
\item $u = (u_1, . . . , u_n)$ where $u_i: A \mapsto \mathbb{R}$ is a real-valued utility (or payoff) function for player $i$.

\end{itemize}
\end{definition}

Agents choose strategies to play in a game. 
A strategy can be to select an action and play it which is called a \emph{pure strategy}.
Agents can also randomize over the set of available actions according to some probability distribution.
Such a strategy is called a \emph{mixed strategy}.
The set of all strategies for agent $i$ is denoted by $S_i$.
We use $s=(s_1,s_2,\ldots,s_n)$ and $s_{-i}$ to denote a strategy profile of the agents, and a strategy profile of the agents other than $i$, respectively.
Let $S_{-i}$ denote the set of all $s_{-i}$'s.

We analyze games using \emph{solution concepts}. 
Solution concepts are principles according to which we identify interesting subsets of the outcomes of a game.
Solution concepts can be seen as a way to predict the outcome of a game assuming some behavior of rational agents who try to maximize their own utilities.
The most significant solution concept in game theory is the \emph{Nash equilibrium}.
In the Nash equilibrium, we analyze the games from an individual agent's point of view.
The strategy that an agent chooses in a Nash equilibrium is called the \emph{best response}.

\begin{definition}[Best Response]
Agent $i$'s best response to the strategy profile $s_{-i}$ is a mixed strategy $s_i^* \in S_i$ such that $u_i(s_i^*,s_{-i})\geq u_i(s_i,s_{-i})$ for all strategies $s_i \in S_i$.
\end{definition}

Subsequently, the Nash equilibrium is defined as follows.

\begin{definition}[Nash Equilibrium]
A strategy profile $s=(s_1,\ldots,s_n)$ is a Nash equilibrium if, for all agents $i$, $s_i$ is a best response to $s_{-i}$.
\end{definition}

The other strategy that is of central significance is the \emph{dominant strategy}.
To define this strategy, first we define weak domination.
\begin{definition}[Weak Domination]
Let $s_i$ and $s'_i$ be two strategies of agent $i$. 
Then $s_i$ weakly dominates $s'_i$ if for all $s_{-i}$, $u_i(s_i,s_{-i}) \geq u_i(s'_i,s_{-i})$, and for at least one $s_{-i} \in S_{-i}$, it is the case that $u_i(s_i, s_{-i}) > u_i(s'_i,s_{-i})$.
\end{definition}

Consequently, a weakly dominant strategy is defined as follows.
\begin{definition}[Weakly Dominant Strategy]
A strategy is weakly dominant for an agent if it weakly dominates any other strategy for that agent.
\end{definition}

A strategy profile $(s_1,\ldots,s_n)$ in which every $s_i$ is weakly dominant for agent $i$ is called \emph{equilibrium in weakly dominant strategies}.
Other types of dominant strategies, strictly and very weakly dominant strategies, can be defined accordingly.
However, in this thesis, we only work with weakly dominant strategies and, for simplicity, we drop the modifier ``weakly".
An equilibrium in dominant strategy is a stronger concept than the Nash equilibrium.
Every equilibrium in dominant strategies is a Nash equilibrium but not always a Nash equilibrium is an equilibrium in dominant strategies.

Back to our discussion about mechanism design, let us recall that mechanism design is in fact inverse game theory.
In mechanism design, we assume unknown individual preferences (utilities), and attempt to design a game such that no matter what the unknown preferences are, the equilibrium (e.g. equilibrium in dominant strategies) of the game is guaranteed to have a certain set of properties. 
Mechanism design is perhaps the most ``computation-driven" part of game theory since it concerns itself with the design of protocols for distributed decision makers.
However, since the decision makers are not necessarily cooperative, one can think of mechanism design as an exercise in ``incentive engineering".

\section{Mechanism Design for Combinatorial Auctions}

While mechanism design is a broad area in game theory, we study mechanism design for welfare maximization in combinatorial auctions.
The problem of maximizing \emph{social welfare} in \emph{combinatorial auctions} (CAs) is of central importance in both theory and practice of mechanism design.
In CA, there are $m$ items for sale and $n$ bidders competing for these items.
Each bidder $i$ has a private valuation $v_i(S)$ for each package $S$ of items.
Alternatively, we say $v_i$ is the private type of bidder $i$.
The social welfare of an allocation $S_1, \ldots, S_n$ of items to the bidders is $\sum_{i=1}^n v_i(S_i)$.

An outcome, in the setting of CA, is a pair of allocations as well as the payments to the bidders: $o=(S_1,\ldots,S_n, p_1, \ldots, p_n)$.
In CA, bidders usually have quasi-linear utility functions.
A Quasi-linear utility function is of the form $u_i(S_i)=v_i(S_i)-p_i$: the utility of a bidder $i$ for an allocation is the private valuation of the bidder less the payment for the allocation.
With this type of utility functions, the mechanism can charge or reward the bidders by an arbitrary amount of money in order to incentivize the bidders to truthfully report their values. 

A mechanism implements a \emph{social choice function} in an \emph{equilibrium}.
A social choice function is a mapping from types to outcomes.
In our setting, the social choice function is optimizing social welfare: $\max_{S_1,\ldots,S_n}\sum_{i=1}^n v_i(S_i)$.

Bidders choose strategies to bid in an auction. 
A strategy is a function from private types (valuations) to actions (bids).
Bidders are rational, that is they bid to maximize their own utility.
A strategy is dominant, if the bidder bids regardless of the bids submitted by other bidders.
A strategy profile in which every bidder has a dominant strategy is called an \emph{equilibrium in dominant strategies}.
We seek a mechanism that maximizes social welfare in an equilibrium in dominant strategies.

In order to address a mechanism design problem, the \emph{revelation principle} encourages us to focus on \emph{incentive-compatible} \emph{direct} mechanisms.
In a direct mechanism, the only action available to each agent is to report its private information.
A direct mechanism is incentive compatible or \emph{truthful} in dominant strategies if every agent $i$ has a dominant strategy to truthfully report its private type.
\begin{definition}[Revelation Principle]
If there exists any mechanism that implements a social choice function in dominant strategies then there exists a direct mechanism that implements the social choice function in dominant strategies and is truthful.
\end{definition}

In other words, the task of solving a mechanism design problem can be addressed by finding a direct and truthful mechanism.
Thus, we limit ourselves to social-welfare maximizing mechanisms which are direct and incentive compatible in dominant strategies (truthful).
In the present text, we use the terms ''truthfulness" and ''incentive compatibility" interchangeably, and by either we mean ''incentive compatibility in dominant strategies".
In order to use this useful observation, we have to take into consideration two important points.
First, by using the revelation principle the computational burden of finding strategies for the agents is pushed onto the mechanism.
This fact draws our attention to the computational complexity of the underlying algorithm of the mechanism.
Second, since agents have to fully reveal their types, this can place a burden on the communication channel.
In our mechanisms, we have considered these points, and the proposed mechanisms do not suffer from any of these issues.
Given that we only work with \emph{direct-revelation} mechanisms, from now on, we use the word mechanism rather than direct mechanism.

To be more specific about mechanism design for the CA setting, we say a mechanism is a protocol with three components.
First, the mechanism extracts a report from each bidder that describes the bidder's valuations.
Second, the  mechanism computes an allocation that determines the package of items to be allocated to each bidder (allocation rule).
Third, the mechanism charges each bidder an amount of money (payment rule).

If each bidder, in anticipation of the outcomes of the allocation and payment rule, reports her valuation truthfully, it is said that the auction mechanism is incentive-compatible in dominant strategies or truthful.

Incentive compatibility in dominant strategies is stronger than the \emph{Bayes-Nash incentive compatibility}.
In Bayes-Nash incentive compatible mechanism, a bidder is truthful provided that other bidders bid truthfully.
Incentive compatibility in dominant strategies assures that the strategy of each bidder, regardless of the bids submitted by other bidders, is to bid truthfully.

Back to our example of finding the maximum among numbers, if we assume the agents have quasi-linear utilities, we can model the problem as a mechanism design problem for selling one item truthfully to the agent whose private value for the item is maximum.
With quasi-linear utility functions, Vickrey's mechanism called \emph{second price auction} solves this mechanism design problem.
In particular, it finds the maximum value reported by the agents and charges the corresponding agent an amount equal to the second highest number.
It is well known that in a second price auction, each agent truthfully reports its private type \cite{Vickrey61}.

Thus, the algorithm for finding the maximum works correctly provided that the agents have quasi-linear utilities and are told about the  possible payment by the owner of the maximum number.
This phenomena is quite amazing in that despite the private data and pure selfish behavior, the maximum number can be correctly found.
All the field of mechanism design is just a generalization of this possibility.

\section{VCG Auctions}


In combinatorial auctions, the goal of mechanism design is to maximize social welfare while providing the bidders with incentives to truthfully report their private valuations. 
In economics, the celebrated Vickrey-Clarke-Groves (VCG) mechanism provides both objectives.
In particular, VCG finds the welfare maximizing allocation $S_1^*, \ldots, S_n^*$, and employs the following payment rule in order to guarantee incentive compatibility.
Each bidder $i$ pays her externality, the amount by which
his allocated bundle reduced the total reported value of the bundles allocated to others:
$\sum_{l=1, l\neq i}^n v_i(S'_i)-\sum_{l=1, l \neq i}^n v_i(S_i^*)$.
The first term is the maximum social welfare obtainable without bidder $i$; that is $(S'_1,\ldots,S'_n)$ maximizes social welfare in a market without bidder $i$. 
The second term is the welfare of the market without bidder $i$ under the social welfare maximizing allocation  $S_1^*, \ldots, S_n^*$.

The success of VCG strongly relies on optimizing the social welfare.
Computationally, this is not always possible because CA is a NP-hard problem.
Even a restricted type of CAs in which bidders are single-minded is NP-hard since we can show a reduction from the independent-set problem to the CA problem with single-minded bidders \cite{blumrosen2007combinatorial}.
A single-minded bidder is interested only in a single specific package of items and gets a specific value if she gets the whole package (or any superset) and zero value for any other package.
An essential question arises here.
Can we use approximation algorithms -- that are usually employed by computer scientists to tackle hard problems -- and then use VCG-like payments to obtain truthfulness?

\section{Truthful Approximation Mechanisms}
An algorithm is an $\alpha$-approximation algorithm (or has approximation ratio $\alpha$) if the the objective value of the computed solution is at least a factor $\alpha$ of the value of the optimal allocation (the allocation with maximum objective function value).
Unfortunately, approximating the social welfare and then using the idea of VCG payments does not preserve truthfulness \cite{Nisan01algorithmicmechanism}.
For instance, assume in a single-item auction, we assign the item to the second highest bid (as an approximation to the highest value) at the price of the third highest bid. 
This auction has no equilibrium because for example the bidder with the highest value would decrease her bid below the second bid to obtain the item, and similarly, the second bidder would decrease her bid and so on.

Resolving the tension between approximation algorithms and incentive compatibility is the topic of \emph{Algorithmic Mechanism Design} \cite{Nisan01algorithmicmechanism}.
In particular in algorithmic mechanism design we ask ourselves if we can theoretically guarantee a ratio for an incentive-compatible approximation algorithm with private input data that (almost) equals the ratio of and approximation algorithm presented for the same problem with public input data?
This problem has been extensively studied in the last two decades for various types of valuations \cite{blumrosen2007combinatorial}.

Not surprisingly, randomized algorithms have been proven to be more promising than deterministic algorithms in algorithmic mechanism design.
For example, the two main frameworks presented by Lavi and Swamy (LS framework) and Dughmi et al. (convex rounding) are randomized and yield truthfulness in expectation.
Truthfulness in expectation certifies that bidders who are risk-neutral (expected utility maximizer) have no incentive to report false valuations.

Generally speaking, the two frameworks rely on the idea of \emph{relaxation} and \emph{rounding}.
The LS framework first optimizes a LP relaxation of the problem, and then uses a specific rounding technique called \emph{meta-randomized rounding} originally proposed by Carr and Vempala \cite{carr2000randomized}.
Meta-randomized rounding yields a convex decomposition of a fractional point into polynomially-many integer points.
The rounding technique has been successfully applied to mechanism design with \cite{lavi2011truthful} and without money \cite{dughmi2010truthful}.
One drawback of the first presentations of the meta-randomized rounding is that, the technique relies on the ellipsoid method which is notoriously of low practical usability and is mostly of theoretical importance \cite{carr2000randomized,lavi2011truthful}.

The convex rounding technique optimizes a convex function which is the image of a rounding algorithm over a relaxed set of solutions.
Convex rounding optimizes over the outcome of a rounding algorithm, and this way the rounding algorithm is integrated into the optimization problem.
We will discuss both techniques in Section \ref{tie}.

\section{Approximations to Obtain Truthfulness}

Not always quasi-linear utilities and transfer of money are available.
There are many situations in which no money is involved because of the nature of the problem or because of the law.
Truthful mechanism design without money under general preferences is a classic topic in social choice theory.
Without money, mechanism designers face significant obstacles in truthful mechanism design.
Think about the combinatorial auctions setting defined above.
The main idea of truthfulness strongly relies on the payments.
Payments enter the utility of the bidders and are leveraged to incentivize the bidders to bid truthfully.
When payments are not available, the utility of the agents depend only on their valuation for the allocated package of items.
For payment-free environments, it sounds unlikely for the mechanism designers to be able to provide agents with incentives to truthfully report their private types.

It is in fact the case; the Gibbard-Satterthwaite theorem proves that the class of truthful mechanisms is limited to dictatorships \cite{Gibbard73,Satterthwaite75}. 
In particular, it states that any truthful social choice function which selects an outcome among three or more alternatives has to be trivially aligned with the preference of a single agent.
There have been a number of extensions analyzing more specific domains without money, all resulting in impossibility results \cite{Papai01,Ehlers03,Hatfield09}. 
To circumvent the Gibbard-Satterthwaite impossibility, researchers have introduced restricted domains with
additional assumptions to admit truthful mechanisms.
For example, when agents valuations are restricted to single-peaked preferences over a one-dimensional public space, returning the \textit{median} of the peaks determines a truthful social choice \cite{moulin1980strategy}.
Single-peaked preferences have a single most-preferred point in an interval, and are decreasing as one moves away from that peak.

Procaccia and Tennenholtz introduced the technique of welfare approximation as a means to drive truthful approximation mechanisms without money \cite{procaccia2013approximate}. This type of approximation is not meant to handle computational intractability but a method to achieve truthfulness. 
We show applications of this technique in mechanism design without money for some restricted domains of preferences.

\section{Relaxation and Rounding}

Usually, when faced with NP-hard problems, computer scientists turn to approximations or heuristics.
An approximation algorithm runs in polynomial time in the size of the encoding of input data, and returns a provable approximation of the optimal solution.
The approximation ratio is proven with respect to the worst-case analysis of the algorithm.
An algorithm is an $\alpha$-approximation algorithm (or has approximation ratio $\alpha$) if the the objective value of the computed solution is at least a factor $\alpha$ of the value of the optimal solution.

Relaxation and Rounding (henceforth, \emph{relax} and \emph{round}) is a well-known and ubiquitous technique for designing  approximation algorithms.
The relax and round technique is also the most successful machinery to design \emph{truthful} approximation algorithms that run in polynomial time.
For quasi-linear utilities the idea of relax and round is omnipresent.
The idea has also been applied to mechanism design without money.

Given the significance of the idea of relax and round in the literature, and that several truthful mechanisms utilize the idea, in this note, we elaborate further on it.
We explain the idea for quasi-linear valuations first, and then we explain how to apply the technique to mechanism design without money.

\section{Setting}
In a market, a group of $n$ bidders are vying for a set of items.
A feasible solution is an allocation of items to bidders, satisfying (possibly) some given constraints.
Each bidder $i$ has a \emph{private} valuation function $v_i$ with value $v_i(S_i)$ for a feasible solution $(S_1,S_2,\ldots, S_n) \in \mathcal{S}$, where $\mathcal{S}$ denotes the set of all feasible solutions.
For simplicity, we denote the value of bidder $i$ for solution $x=(S_1,S_2,\ldots, S_n)$ by $v_i(x)$, where $v_i(x)=v_i(S_i)$.
Let $V_i$ denote the set of all valuation functions of bidder $i$, and $V=V_1\times V_2 \times \ldots \times V_n$ denote the set of all profiles of valuations. 
Denote the profile of valuations of bidders other than $i$ by $v_{-i}$.
Define function $f:\mathcal{S} \rightarrow \mathbb{R}_+$ with $f(x)=\sum_i v_i(x)$ for all $x \in \mathcal{S}$.
 
The following optimization problem expresses the social welfare maximization problem.
  \begin{align} \label{prog-max-int}
    \text{Maximize} \quad &  f(x)  \\
	 \text{subject to} \quad & x \in \mathcal{S} \notag
  \end{align}
The social welfare of a solution $x \in \mathcal{S}$ thus equals $f(x)$.
The problem that we study is to analyze Problem \ref{prog-max-int} as a mechanism design problem.
Specifically, we wish to maximize social welfare while making sure each bidder $i$ reports her valuation function $v_i$ truthfully.
Given that $v_i$'s are private, we build program \ref{prog-max-int} from the reported valuations (bids).
If the optimization Problem \ref{prog-max-int} can be solved in polynomial time, then VCG can be employed as a dominant-strategy incentive-compatible (truthful) mechanism.
However, in many cases including combinatorial auctions, Problem \ref{prog-max-int} is NP-hard, and we can only hope for approximations of the optimal solution.
In what follows, we describe how to employ the idea of relax and round to achieve mechanisms that satisfy the weaker notion of truthfulness in expectation.

\section{Relaxation}
The idea of relax and round proceeds as follows.
We find a relaxation of Problem \ref{prog-max-int}, solve it and then carefully round the outcome so as to obtain particular properties.
The relaxed problem is as follows.
  \begin{align} \label{prog-max-relaxed}
    \text{Maximize} \quad &  L(x)  \\
	 \text{subject to} \quad & x \in P \notag
  \end{align}

Where $P$ is a relaxed set of solutions that contains $\mathcal{S}$, and may contain infeasible solutions.
Polytope $P$ is the intersection of a set of linear constraints in the positive orthant, and is a \emph{packing} polytope: if $x \in P$ and $y \leq x$ then $y \in P$.
Function $L:P \rightarrow \mathbb{R}_+$ is concave or linear in $x$, and for every $x \in \mathcal{S}$, it is the case that $L(x) \geq \alpha f(x)$ for some $\alpha$, $0<\alpha\leq 1$.
As we see below, $\alpha$ will be the approximation ratio of the algorithm.
Function $L$ is separable, i.e. $L(x) = \sum_i L_i(x)$, however $L_i(x)$ need not  necessarily be concave.
Let us call $L$ \emph{relaxed objective function}.
Recall, the mechanism designer has only access to the reported valuations, thus all value computations are carried out with respect to the reported valuations. 

Given that function $L$ is concave and all constraints of polytope $P$ are linear, Problem \ref{prog-max-relaxed} is a convex optimization problem that can be solved efficiently.
Once we solved Problem \ref{prog-max-relaxed}, we proceed to the next step, rounding.

\section{Rounding}
A randomized rounding scheme $r: P \rightarrow \mathcal{S}$ returns a randomized solution $X \sim r(x)$ for any $x\in P$. 
The rounding scheme is \emph{oblivious}, i.e. it does not depend on the valuations.
Solution $X=(S_1,\ldots,S_n)$ is always feasible, i.e. $Pr[X \notin \mathcal{S}] = 0$.
With regard to the expected value of the rounded solution (with respect to reported valuations), one of the following cases may occur:
\begin{enumerate} [label={(\alph{enumi})}] 
\item \label{rnd-bigger} 
$\E[f(X)] > L(x)$. In this case, apply an oblivious rounding scheme $r'$ to $X$ in order to obtain $X'=(S'_1,\ldots,S'_n)$ such that $\E[f(X')]=L(x)$.

\item \label{rnd-smaller} 
$\E[f(X)] < L(x)$ but $\E[f(X)] \geq \beta L(x)$ for some $\beta$, $0<\beta\leq 1$. Apply an oblivious rounding $r'$ to $X$ in order to obtain $X'$ such that $\E[f(X')]=\beta L(x)$. In this case, the approximation ratio of the algorithm will be $\alpha \beta$.
\item \label{rnd-equal} 
$\E[f(X)]=L(x)$. No more action is needed, $r'(X)=X$.
\end{enumerate}

We will refer to the three cases of the rounding, \ref{rnd-bigger}, \ref{rnd-smaller}, and \ref{rnd-equal} in the following.

\section{Truthful-in-expectation Mechanism} \label{tie}
Here, we explain how to use the foregoing idea of relax and round to define a truthful-in-expectation mechanism.
A mechanism comprises an allocation rule $\mathcal{A}$ and a payment rule $p$.
A mechanism $(\mathcal{A},p)$ is truthful-in-expectation if for every player $i$ with true valuation $v_i$ and reported valuation $v'_i$, we have
\begin{equation} \label{equ-tie}
\E[v_i(\mathcal{A}(v))-p_i(v)] \geq \E[v_i(\mathcal{A}(v'_i,v_{-i}))-p_i(v'_i,v_{-i})].
\end{equation}
The expectation in (\ref{equ-tie}) is taken over the coin flips of the mechanism.
In order to achieve a truthful-in-expectation mechanism, we propose the following allocation rule.
\\
\begin{algorithm}[H]\label{alg-midr}

	1. Let $x^*=\argmax_{x\in P} L(x)$
	
	2. Let $X' \sim r'(r\small(x^*\small))$ \tcc*	 {\footnotesize $r'$ is chosen according to the occurring case: \ref{rnd-bigger}, \ref{rnd-smaller}, or \ref{rnd-equal}.}
	
	\Return $X'$
	
\caption{Allocation Rule.}
\end{algorithm}

To define the payments, let $x^*=\argmax_{x\in P} L(x)$, and $(S_1,\ldots,S_n) \sim r'(r\small(x^*\small))$. 
The payment rule determines the payment by each bidder. 
The expected VCG payment of any bidder $k$ is equal to 

\begin{equation} \label{payment-rule}
p_k = \max_{x \in P} L^{-k}(x) - \E[\sum_{i,i\neq k} v_i(S_i)].
\end{equation}

Where $L^{-k}(x)$ is the relaxed objective function for a market in which bidder $k$ is discarded, and the allocation rule is run for the society of all other bidders.
Equivalently, we can say $L^{-k}(x)$ is defined assuming $v_k(x)=0$, $\forall x \in P$.
Let $x_{-k}^*=\argmax_{x\in P} L^{-k}(x)$, and $(T_1,\ldots,T_n) \sim r'(r\small(x_{-k}^*\small))$. 
If we let any bidder $k$ pay an amount of $\sum_{i,i\neq k} v_i(T_i)-\sum_{i,i\neq k} v_i(S_i)$, then using linearity of expectation, the expected value of this payment is equal to the expression in (\ref{payment-rule}).


\begin{theorem} \label{thm-general-ic}
The allocation rule above (Algorithm \ref{alg-midr}) supplemented with the payment rule (\ref{payment-rule}), constitute a truthful-in-expectation mechanism with a provable approximation ratio of $\alpha$ (or $\alpha \beta$).
\end{theorem}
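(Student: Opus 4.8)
The plan is to recognize the proposed mechanism as an instance of the Vickrey--Clarke--Groves paradigm applied to a \emph{maximal-in-distributional-range} (MIDR) allocation rule, and to prove the two assertions separately: first truthfulness-in-expectation, then the approximation guarantee. Throughout, let $\gamma=1$ in cases \ref{rnd-bigger} and \ref{rnd-equal}, and $\gamma=\beta$ in case \ref{rnd-smaller}, so that in every case the composed rounding $r'\circ r$ satisfies $\E[f(X')]=\gamma L(x)$ for the point $x$ being rounded.

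The first key step is to establish the welfare identity in a form that isolates the distributional range from the valuations. Because both $r$ and $r'$ are oblivious, the map $x\mapsto D(x)$, where $D(x)$ is the law of $r'(r(x))$ on $\mathcal{S}$, does not depend on the reported valuations; hence $\mathcal{R}=\{D(x):x\in P\}$ is a fixed family of distributions. Applying the rounding guarantee with an arbitrary profile $w$ gives $\E_{X\sim D(x)}[\sum_i w_i(X_i)]=\gamma L^{w}(x)$, where $L^{w}$ is the relaxed objective built from $w$. Since $\gamma>0$ is a constant, maximizing $L^{w}$ over $P$ is equivalent to choosing the distribution in $\mathcal{R}$ that maximizes expected welfare under $w$; this is precisely the MIDR property, and it is the crux of the whole argument.

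For truthfulness, fix a bidder $k$ with true valuation $v_k$, let the others report $v_{-k}$ truthfully, and let $k$ report $v'_k$, inducing the profile $v'=(v'_k,v_{-k})$ and the maximizer $x^{*}=\argmax_{x\in P}L^{v'}(x)$. I would first observe that the pivot term $\max_{x\in P}L^{-k}(x)$ in the payment \eqref{payment-rule} is independent of $v'_k$, because $L^{-k}$ is defined with $v_k\equiv 0$. Substituting the payment into the expected utility and combining the shared term $\E[\sum_{i\neq k}v_i(S_i)]$ collapses $k$'s expected utility to $\E[\sum_i v_i(S_i)]-\max_{x\in P}L^{-k}(x)$, and by the identity above this equals $\gamma L^{v}(x^{*})-\max_{x\in P}L^{-k}(x)$, where $v=(v_k,v_{-k})$ is the \emph{true} profile. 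Only the term $\gamma L^{v}(x^{*})$ depends on the report, and since $x^{*}\in P$ for every report while truthful reporting makes $x^{*}=\argmax_{x\in P}L^{v}(x)$, truthful reporting maximizes $k$'s expected utility; this is exactly inequality \eqref{equ-tie}.

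The approximation ratio then follows from a short chain of inequalities. Under truthful reports the expected welfare is $\gamma L^{v}(x^{*})=\gamma\max_{x\in P}L(x)$; since $\mathcal{S}\subseteq P$ we have $\max_{x\in P}L(x)\ge\max_{x\in\mathcal{S}}L(x)$, and the relaxation guarantee $L(x)\ge\alpha f(x)$ on $\mathcal{S}$ gives $\max_{x\in\mathcal{S}}L(x)\ge\alpha\,\OPT$, where $\OPT=\max_{x\in\mathcal{S}}f(x)$. Multiplying by $\gamma$ yields expected welfare at least $\alpha\,\OPT$ in cases \ref{rnd-bigger} and \ref{rnd-equal}, and at least $\alpha\beta\,\OPT$ in case \ref{rnd-smaller}. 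I expect the main obstacle to lie not in these manipulations but in justifying the welfare identity uniformly: one must argue that the selection among cases \ref{rnd-bigger}--\ref{rnd-equal} and the auxiliary map $r'$ are genuinely valuation-independent, so that $\mathcal{R}$ is a single fixed range and the same guarantee $\E_{X\sim D(x)}[\sum_{i}w_i(X_i)]=\gamma L^{w}(x)$ holds for every profile $w$ (including the punctured profiles with $v_k\equiv0$ underlying the pivot term). Once obliviousness is pinned down at this level, the careful bookkeeping of \emph{true} versus \emph{reported} valuations is the only remaining delicate point, and it is handled exactly as above.
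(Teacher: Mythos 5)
Your proposal is correct and follows essentially the same route as the paper's proof: the expected utility of bidder $k$ collapses to (expected welfare under the true profile at the chosen point) minus a report-independent pivot term, truthfulness follows because the chosen point maximizes the relaxed objective over the fixed range, and the approximation ratio follows from $\mathcal{S}\subseteq P$ together with $L(x)\geq\alpha f(x)$ on $\mathcal{S}$. Your only additions are presentational: the unified constant $\gamma$ covering all three rounding cases at once, and the explicit statement of the oblivious welfare identity for arbitrary profiles, which the paper invokes implicitly via ``by the construction of the rounding.''
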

\begin{proof}
Let $x^*=\argmax_{x\in P} L(x)$. 
Fix bidder $k$ and $v_{-k}$ the valuations of bidders other than $k$.
Let $(S_1, \ldots, S_n) \sim r'(r(x^*))$.
Let $C=\max_{x \in P} L^{-k}(x)$.
The expected utility of bidder $k$ from reporting true values $v_k$ is the following.
\begin{displaymath}
\begin{array}{ll}
\E[u_k(v)]&=\E[v_k(S_k)]-p_k \\
&=\E[v_k(S_k)]-( C - \E[\sum_{i,i\neq k} v_i(S_i)]) \\
&=L(x^*)-C.
\end{array}
\end{displaymath}

Recall, $L(x^*)=\E[\sum_{i} v_i(S_i)]$ by the construction of the rounding.
The second term $C$ is independent of the valuations of bidder $k$, thus the bidder cannot influence it.
In order to increase her expected utility, the bidder has to increase the first term, i.e. $L(x^*)$, but the first term is already the maximum possible value.
Hence, the bidder cannot improve her expected utility.
To be more specific, let bidder $k$ report $v'_k$ rather than $v_k$.
Let $x'=\argmax_{x\in P} L'(x)$, where $L'$ is the relaxed objective function obtained from the new reported valuations $(v'_k, v_{-k})$. 
Let $(S'_1, \ldots, S'_n) \sim r'(r(x'))$.
The expected utility of the bidder, in this case, will be lower as shown below.

\begin{displaymath}
\begin{array}{ll}
\E[u_k(v'_k, v_{-k})]&=\E[v_k(S'_k)]-(C - \E[\sum_{i,i\neq k} v_i(S'_i)])  \\
&=\E[v_k(S'_k)]+\E[\sum_{i,i\neq k} v_i(S'_i)]- C \\
&=L(x')-C \\
&\leq L(x^*) - C \\
&=\E[u_k(v)]
\end{array}
\end{displaymath}

For the first equality, recall $L'(x')=\E[v'_k(S'_k)]+\E[\sum_{i, i\neq k}v_i(S'_i)]$, however, the utility of bidder $k$ is calculated with respect to true valuation $v_k$.
Thus, the bidder has no incentive to report a false valuation function.

The approximation ratio admits a simple proof.
Consider an optimal integral solution $Y^*=(S^*_1,\ldots,S^*_n)$.
Since $Y^* \in P$, we have $L(Y^*) \leq L(x^*)$, where $x^*=\argmax_{x\in P} L(x)$.
By definition of $L$, we have $L(Y^*)\geq \alpha f(Y^*)$. 
Let $X' \sim r'(r(x^*))$.
By the construction of the rounding, we have $\E[f(X')]=L(x^*) \geq L(Y^*) \geq \alpha f(Y^*)$, the desired conclusion. 
Similarly, we can show the approximation ratio of $\alpha \beta$ for case \ref{rnd-smaller} of the rounding.
This completes the proof.
\end{proof}

Therefore, we explained how one can use the relax and round technique to define an allocation and a payment rule in order to obtain a truthful-in-expectation mechanism.

The technique of relax and round that we explained here is an abstraction of the existing truthful approximation algorithms that employ relaxation and rounding technique.
Here, we abstract away from many details of the realization of the technique. 
These details  are varied for different applications.
For example, in our explanation, we gave no specification about how to define function $L$, or the rounding algorithms $r$ and $r'$.
It is instructive to note that, from the provided explanation, we observe that the relax and round technique has an extra step (calling $r'$) when used for designing \emph{truthful} approximation mechanisms compared to the usage of the technique for approximation algorithms.
This can also be seen as a reason for the fact that designing truthful approximation algorithms is more stringent than designing (untruthful) approximation algorithms.
Several truthful mechanisms that are grounded in mathematical optimization follow a variation of the foregoing relax and round technique.
We briefly explain this observation in the following.

The framework proposed by Lavi and Swamy (LS framework) can be described as follows.
The LS framework, first relaxes the underlying integer program of the problem to a linear program (LP), and solves the LP.
Then the solution of the LP is scaled down by a specific factor (the integrality gap of the underlying polytope), and the meta-randomized rounding is applied to the scaled-down solution \cite{lavi2011truthful}.
For more details about LS framework, we refer the reader to \cite{lavi2011truthful}.

In the language of the relax and round technique that we described above, the LP in the LS framework is closely related to Problem \ref{prog-max-relaxed}. 
Function $L$ is in fact the objective function of the LP, therefore $L$ is a linear function.
Polytope $P$ is the region of feasible solutions of the LP scaled down by the integrality gap.
Therefore, $P$ does not contain all solutions of $\mathcal{S}$.
However, $P$ has a special property: for any $v \in V$, there exists a point $x \in P$ such that $L(x) \geq \alpha\cdot \max_{y \in \mathcal{S}}\sum_i v_i(y)$, where $1/\alpha$ is the integrality gap of the LP. 
Moreover, in the LS framework for every $x \in \mathcal{S}$, $L(x) = f(x)$.
The meta-randomized rounding used in the LS framework results in case \ref{rnd-equal} of the rounding step.
Therefore, no second rounding $r'$ is required.

Convex rounding is the other general framework for designing truthful approximation mechanisms \cite{dughmi2011convex}.
In convex rounding, the allocation rule optimizes directly on the outcome of the rounding algorithm, rather than over the outcome of the relaxation algorithm.
The convex rounding can be described by the relax and round technique as follows.
Function $L$ is a concave function.
Polytope $P$ is simply a relaxed set of feasible solutions.
The rounding scheme ends in case \ref{rnd-equal} of the rounding step.

Archer et al. propose a truthful-in-expectation mechanism for combinatorial auctions with single parameter agents using the idea of relax and round \cite{archer2004approximate}.
The authors employ a linear programming relaxation for the problem, and the rounding case \ref{rnd-smaller} occurs.
Similarly, Dobzinski et al. propose a truthful-in-expectation mechanism for combinatorial auctions with subadditive bidders.
They also use a linear programming relaxation and the rounding case \ref{rnd-smaller} occurs \cite{dobzinski2010truthfulness}.

In the solution presented for the generalized assignment problem \cite{fadaei2014gap}, 
the authors use a concave objective function and a polytope which contains all integral solutions as well as infeasible solutions, and 
the rounding case \ref{rnd-bigger} happens.
In addition, the authors employ the following new observation.
From the explanation above, we know that the rounding schemes $r$ and $r'$ are oblivious.
It is possible to extend this idea to non-oblivious rounding schemes given that the following condition holds.
A rounding scheme $r:P\times V \rightarrow \mathcal{S}$ preserves truthfulness if for any misreported valuation $v'_i$, we have $\E[f(r(x,v'_i,v_{-i}))] \leq \E[f(r(x,v))]$.
This condition assures that the expected value of the social welfare is maximized by truthful bidding.

In the literature, the allocation rule defined in Algorithm \ref{alg-midr} is termed Maximal-In-Distributional-Range (MIDR) algorithm.
A MIDR algorithm optimizes over a distributional range of the solutions.
To preserve truthfulness, the distributional range of the solutions must be fixed independently of the valuations of the bidders \cite{Dobzinski09}.
The distributional range implied by the relax and round technique described above is the image of the rounding scheme:
\begin{equation} \label{midr-in-relaxation}
\textit{Distributional Range} \equiv \bigcup_{x \in P} \{r'(r(x))\}.
\end{equation}
The (distributional) range in (\ref{midr-in-relaxation}) is independent of bidders' valuations. 
Since we optimally maximize over this range, Algorithm \ref{alg-midr} is in fact a MIDR algorithm.

\section{Mechanism Design without Money}
To use the relax and round technique for mechanism design without money, we do as follows.
First, we relax the underlying integer program of the problem to a linear program.
Let $P$ denote the region of the feasible solutions to the linear program.
We find a fractional point in $P$ by using an algorithm that is \emph{fractionally truthful}.
A fractionally truthful algorithm $\mathcal{A}$ takes $v\in V$ and $P$ as input, and computes a point $x \in P$ such that for any bidder $i$ and untruthful valuation $v'_i$, we have $v_i(\mathcal{A}(v)) \geq v_i(\mathcal{A}(v'_i, v_{-i}))$.
To maintain a good approximation of the total value, the value of the computed point $x$ must be at least as good as $\alpha$ times ($0<\alpha\leq 1$) the optimal solution to the relaxed problem.

Next, we round $x$ using a randomized rounding scheme $r$ with the following properties.
If $X \sim r(\mathcal{A}(v,P))$ we must have
$(i)$ (feasibility) $Pr[X \in \mathcal{S}]=1$ and $(ii)$ (truthfulness)  $\forall i$, $\E[v_i(X)]=\beta v_i(x)$ for $0<\beta \leq 1$.

This technique results in a truthful-in-expectation $(\alpha \beta)$-approximation algorithm.
We refer the reader to \cite{dughmi2010truthful}, \cite{chen2013truthful} and \cite{fadaeiGapWoMoney2016}.

\bibliographystyle{splncs03}
\bibliography{literature}

\end{document}